
\documentclass{article}

\usepackage{microtype}
\usepackage{graphicx}
\usepackage{subfigure}
\usepackage{booktabs} 

\usepackage{hyperref}
\usepackage{multirow}



\usepackage[accepted]{icml2025}

\usepackage{amsmath}
\usepackage{amssymb}
\usepackage{mathtools}
\usepackage{amsthm}

\usepackage[capitalize,noabbrev]{cleveref}

\theoremstyle{plain}
\newtheorem{theorem}{Theorem}[section]

\theoremstyle{definition}

\newtheorem{condition}[theorem]{Condition}
\theoremstyle{remark}

\usepackage[textsize=tiny]{todonotes}

\icmltitlerunning{SMILES-to-Pharmacokinetics Diffusion with Deep Molecular Understanding}

\begin{document}

\twocolumn[
\icmltitle{Drug Discovery SMILES-to-Pharmacokinetics Diffusion Models with Deep Molecular Understanding}



\icmlsetsymbol{equal}{*}

\begin{icmlauthorlist}
\icmlauthor{Bing Hu}{cs}
\icmlauthor{Anita Layton}{math}
\icmlauthor{Helen Chen}{sphs}
\end{icmlauthorlist}

\icmlaffiliation{cs}{Computer Science, University of Waterloo, Canada}
\icmlaffiliation{math}{Applied Math, University of Waterloo, Canada}
\icmlaffiliation{sphs}{Public Health Sciences, University of Waterloo, Canada}

\icmlcorrespondingauthor{Bing Hu}{b25hu@uwaterloo.ca}

\icmlkeywords{Machine Learning, ICML}

\vskip 0.3in
]



\printAffiliationsAndNotice{} 

\begin{abstract}
The role of Artificial Intelligence (AI) is growing in every stage of drug development.
Nevertheless, a major challenge in drug discovery AI remains: Drug pharmacokinetic (PK) datasets collected in different studies often exhibit limited overlap, creating data overlap sparsity.
Thus, data curation becomes difficult, negatively impacting downstream research investigations in high-throughput screening, polypharmacy, and drug combination. 
We propose Imagand, a novel SMILES-to-Pharmacokinetic (S2PK) diffusion model capable of generating an array of PK target properties conditioned on SMILES inputs that exhibit data overlap sparsity. 
We show that Imagand-generated synthetic PK data closely resembles real data univariate and bivariate distributions, and can adequately fill in gaps among PK datasets. As such, Imagand is a promising solution for data overlap sparsity and may improve performance for downstream drug discovery research tasks. Code available at: \url{https://github.com/GenerativeDrugDiscovery/imagand}
\end{abstract}

\section{Introduction}
Generative AI is set to transform drug discovery, where it may cost \$2-3 billion dollars and 10-15 years to bring a single drug candidate to market \cite{kim2021comprehensive}. Generative AI for high-throughput screening (HTS) of ligand candidates reduces drug development costs and is changing how ligands are designed and tested \cite{pushpakom2019drug}. Initial success of drug discovery AI has been in drug repurposing \cite{thafar2022affinity2vec, Morselli_Gysi_2021}, drug-target interaction \cite{lian2021drug}, drug response prediction \cite{pouryahya2022pan}, poly-pharmacy \cite{vzitnik2015gene}, and the generation of synthetic ligands and drug properties \cite{digress, hu2024syntheticdatadiffusionmodels}. Thus far, what has advanced drug discovery AI is a continued effort towards open data for training and testing \cite{huang2021therapeutics, guacamol, gaulton2017chembl}.

Data collection for drug discovery through assay panels is expensive and time-consuming.
Although there are clear advances toward standardization and dissemination of pre-clinical, clinical, and chemical datasets \cite{kim2023pubchem, huang2021therapeutics, nusinow2020quantitative}, challenges arise when merging and linking these datasets together \cite{scoartaroadmap}.
Collected independently, drug discovery datasets often have limited overlap, which poses a challenge for researchers looking to answer research questions requiring data from multiple datasets. One notable example is the study of drug combinations and polypharmacy \cite{scoartaroadmap}.

Recent advances in drug discovery AI have utilized Denoising Diffusion Probabilistic Models (DDPMs) \cite{ddpm}, which yield a new class of diffusion models capable of generating ligand structures \cite{guo2023diffusion, digress, diffbridge, difflinker}. \citet{hu2024syntheticdatadiffusionmodels} have shown that diffusion models can generate pharmacokinetic (PK) properties alongside the ligand diffusion pipeline with promising results. 
Inspired by these advances,  we propose Imagand, which can generate 12 PK target properties from 10 PK datasets conditioned on learned SMILES embeddings. 
Specifically, our contributions are as follows:
\begin{itemize}
    \item We propose Imagand, a novel multi-modal SMILES-to-Pharmacokinetic (S2PK) diffusion model capable of generating an array of target properties conditioned on learned SMILES embeddings. 
    \item We develop a noise model that creates a prior distribution closer to the true data distribution, improving performance.
    \item We show that synthetic data generated from our Imagand model has univariate and bivariate distributions closely matching the real data and improves machine learning efficiency. 
\end{itemize}
Notably, Imagand generates dense synthetic data that overcomes the challenges of sparse PK datasets with limited overlap. Using Imagand, researchers can generate large synthetic PK assays over thousands of ligands to answer poly-pharmacy and drug combination research questions at a fraction of the cost of conducting \textit{in vitro} or \textit{in vivo} PK assay panels. 

\section{Background}

Diffusion methods use families of probability distributions to model complex datasets for computationally tractable learning, sampling, inference and evaluation \cite{guo2023diffusion}. Denoising Diffusion Probabilistic Models (DDPM) \cite{ddpm} first systematically destroy the structure in the data through a forward process, and then in a reverse process, learn how to restore the structure in the data from noise. Recent literature has covered many advances in small-molecule generation using diffusion models \cite{cdgs, edm, egnn, digress}. 


PK broadly describes what the body does to a drug regarding absorption (how the body absorbs the drug), bioavailability (the extent the active drug enters circulation), distribution (how the drug distributes in tissue), metabolism (how the body breaks down the drug), and excretion (how the drug is removed from the body). 
As issues related to PK properties are the primary drivers for compound attrition for small-molecule drug development \cite{kola}, accurate PK computational tools are critical and have advanced in recent times \cite{Waring2015AnAO, DAVIES2020390, ahmed2021impact}.
Physiologically-based pharmacokinetics (PBPK) offers the modelling of PK properties using mathematical equations representing the human body \cite{Sager1823}.
PBPK rely on expensive \textit{in-vitro} and \textit{in-vivo} human and animal experiments and cannot be utilized in high-throughput screening across large numbers of ligands (10K to 100K drugs per day) \cite{obrezanova2023artificial}. 


Extending many PK properties across large arrays of ligands can be costly given the expense associated with data collection for drugs. 
Consequently, oftentimes only small sets of ligands can be feasibly tested for target property data collection studies, leading to minimal overlap between collected datasets \cite{scoartaroadmap}.
Comparing the 11 PK datasets we use in this study, \autoref{tab:totalsparse} shows the minimal overlap sparsity between all of the datasets. 
This challenge poses barriers for scientists interested in answering research questions requiring data across multiple datasets, such as in poly-pharmacy and drug combination research. 

\section{Methodology}

The choice of noise models for noising in diffusion models may have a substantial impact on performance; using a prior distribution close to the true data distribution can make training easier \cite{digress}. As PK properties do not always follow a Gaussian or uniform noise model, we propose a noise model called Discrete Local Gaussian Noise (DLGN). 

\subsection{Discrete Local Gaussian Sampling}

Discrete local gaussian (DLG) sampling is based on principles of inverse transform sampling. Inverse transform sampling is a method for sampling from any probability distribution given its cumulative distribution function. For any variable $X\in \mathbb{R}$, the random variable $F_X^{-1}(U)$ has the same distribution as $X$, where $F_X^{-1}$ is the generalized inverse of the cumulative distribution function $F_X$ of $X$ and $U \sim Unif[0,1]$. 

As the true distribution is not always available, Discrete Local Gaussian Sampling $\mathbb{DLGS}$ looks to discretely approximate the cumulative distribution function by combining binning and Gaussian noise given by \autoref{th:kl1} (with proof \autoref{proof:kl}). Given $N$ bins, a discrete cumulative distribution function $\hat{F}_{X_N}^{-1}$ can be constructed. With $\sigma$ as a scaling factor, we can then define DLG sampling $\phi(X=x)$ as:
\begin{align}
    \mathbb{\phi}(X=x) &:= \mathcal{N}(x; \hat{F}_{X_N}^{-1}(U), \frac{\sigma}{N^2}\mathbf{I})
\end{align}
\begin{theorem}
\label{th:kl1}
Let $D_{KL}(P||Q)$ be the Kullback-Leibler (KL) divergence for a model probability distribution $Q$ and the true probability distribution $P$. For any distribution $X$, we then have:
\[lim_{N \to \infty}\ D_{KL}(X \ ||\  \mathbb{\phi}(X=x)) = 0\]        
\end{theorem}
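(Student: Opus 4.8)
The plan is to recognize the sampling law $\phi$ as a Gaussian mixture and reduce the claim to convergence of this mixture density to the true density $p$ of $X$, in a sense strong enough to control KL divergence. First I would write out the density of $\phi$ explicitly: drawing $U \sim \mathrm{Unif}[0,1]$ and evaluating the step function $\hat F_{X_N}^{-1}(U)$ produces a discrete law supported on the $N$ bin representatives $c_1,\dots,c_N$ with weights $p_i = \Pr(X \in \mathrm{bin}_i)$, and convolving with $\mathcal N(0,\sigma/N^2)$ gives
\[
q_N(x) = \sum_{i=1}^N p_i\, \mathcal N\!\left(x; c_i, \tfrac{\sigma}{N^2}\right).
\]
Since $D_{KL}(X \| \phi) = \int p\log p - \int p\log q_N$ and the differential entropy term $\int p\log p$ is a fixed constant, the whole problem reduces to showing $\int p\log q_N \to \int p\log p$ as $N \to \infty$.

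Second, I would show that both sources of approximation error vanish. The binning error is controlled by $\hat F_{X_N}^{-1}(u) \to F_X^{-1}(u)$ at every continuity point, so the pre-noise law converges weakly to $X$ (recall $F_X^{-1}(U)$ is equal in distribution to $X$) and the maximal bin width tends to zero. The smoothing error is controlled by the Gaussian bandwidth $\sqrt{\sigma}/N \to 0$, so the convolution acts as an approximate identity. Combining these — a fine histogram of $p$ smoothed by a kernel whose bandwidth shrinks at the same rate as the bin spacing — yields pointwise convergence $q_N(x) \to p(x)$ at every point of continuity of $p$, by a standard kernel-smoothed-histogram argument.

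Third, and this is the crux, I would upgrade pointwise convergence of densities to convergence of the KL integral. Weak convergence of $\phi$ to $X$, or even pointwise convergence $q_N \to p$, does not by itself force $D_{KL} \to 0$, because the integrand $p\log(p/q_N)$ can be badly behaved where $q_N$ differs from $p$ by large multiplicative factors. A dominating function is delicate to produce: the naive uniform bound $q_N(x) \le N/\sqrt{2\pi\sigma}$ only gives $\log(p/q_N) \ge \log p - \log N$, whose integral diverges, so crude domination fails and a structured estimate is required. I therefore expect the honest argument to need regularity hypotheses implicit in the statement — absolute continuity of $X$ with a bounded, piecewise-continuous density, finite differential entropy, and a light-tail (e.g. compact-support or sub-Gaussian) condition — so that ``for any distribution $X$'' should be read modulo these. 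Under such hypotheses I would split $\mathbb R$ into a bulk region, where $q_N \to p$ uniformly on compacts and $q_N \ge c\,p$ for a fixed constant yields the integrable upper bound $p\log(p/q_N) \le p\log(1/c)$ so that dominated convergence applies, and a tail region, where the light-tail assumption makes $\int_{\mathrm{tail}} p\,|\log p|$ and $\int_{\mathrm{tail}} p\,|\log q_N|$ uniformly small. Assembling the bulk and tail estimates gives $\int p\log q_N \to \int p\log p$, and hence $D_{KL}(X\|\phi) \to 0$.
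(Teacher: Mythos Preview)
Your proposal is substantially more careful than the paper's own argument, and in fact it surfaces a real issue that the paper elides. The paper's proof is two sentences: it observes that $\hat F_{X_N}^{-1} \to F_X^{-1}$ pointwise and that the Gaussian standard deviation $\sigma/N \to 0$, concludes that $\phi$ converges in law to $F_X^{-1}(U)$, identifies $F_X^{-1}(U)$ with $X$, and then simply asserts $D_{KL}\to 0$. In other words, the paper's route is the weak-convergence heuristic that you explicitly flag as insufficient; it does not address the passage from distributional convergence to KL convergence at all, nor does it impose any regularity on $X$.

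What you do differently is (i) write $q_N$ as the explicit Gaussian mixture, (ii) separate the binning and smoothing errors, and (iii) confront the integrability problem head-on via a bulk/tail split under additional hypotheses (bounded continuous density, finite differential entropy, light tails). This buys you an honest argument: your observation that pointwise or weak convergence of $q_N$ to $p$ does not by itself force $\int p\log(p/q_N)\to 0$ is correct, and your proposed remedy --- a lower bound $q_N \ge c\,p$ on compacts to enable dominated convergence, plus a tail estimate --- is the standard way to close such a gap. You are also right that ``for any distribution $X$'' cannot be taken literally; discrete or heavy-tailed $X$ would break the statement. So your route is both different from and more rigorous than the paper's, at the cost of requiring hypotheses the paper leaves implicit.
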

\begin{figure}[t]
\centering
\includegraphics[width=0.9\columnwidth]{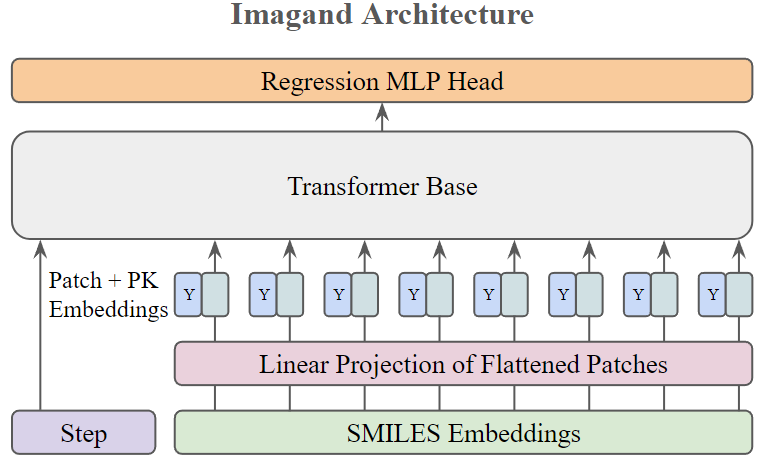} 
\caption{Model overview. Patches are generated from SMILES Embeddings combined with PK embeddings. The patches are then fed along with step embeddings into the base transformer model. A regression MLP head is used to produce the necessary output for denoising.}
\label{fig:arch}
\end{figure}



\subsection{Imagand Model}

Imagand is an S2PK diffusion model conditioned on learned SMILES embeddings from SMILES encoder models to generate target PK properties. Imagand improves accuracy by training a diffusion process \cite{ddpm} utilizing custom noise models.

\subsubsection{Diffusion Model}
Given samples from a data distribution $q(x_0)$, we are interested in learning a model distribution $p_\theta(x_0)$ that approximates $q(x_0)$ and is easy to sample from. \citet{ddpm} considers the following Markov chain with Gaussian transitions parameterized by a decreasing sequence $\alpha_{1:T}\in(0,1]^T$:
\begin{align}
    q(x_{1:T}|x_0) := \mathcal{N}(x_{1:T}|\sqrt{\alpha_{1:T}}x_0, (1-\alpha_{1:T})\mathbf{I})
\end{align}
This is called the forward process, whereas the latent variable model $p_\theta(x_{0:T})$ is the generative process, approximating the \textit{reverse process} $q(x_{t-1}|x_t)$. The forward process of $x_t$ can be expressed as a linear combination of $x_0$ and noise variable $\epsilon$:
\begin{align}
    x_t = \sqrt{\alpha_{t}}x_0\ +\ \sqrt{1-\alpha_{t}} \epsilon
\end{align}
We experiment with noise models $\epsilon$, testing with Average, Uniform, Gaussian, and DLG sampling approaches. Empirically, from our ablation studies, we find that DLG sampling improves performance over other noise models, as it better approximates the original data. We train with the simplified objective:
\begin{align}
    L(\epsilon_\theta) := \sum_{t=1}^T\mathbb{E}_{x_0\sim q(x_0), \epsilon_t}[||\epsilon_\theta^{(t)}(x_t) - \epsilon_t ||_2^2]
\end{align}
Where $\epsilon_\theta:=\{\epsilon_\theta^{(t)}\}_{t=1}^T$ is a set of T functions, indexed by t, each with trainable parameters $\theta^{(t)}$. Convergence analysis of utilizing DLG sampling as a noise model is provided in \autoref{appdx:convergence}. 

\subsubsection{Architecture} Imagand resembles a typical vision transformer architecture \cite{dosovitskiy2021imageworth16x16words}; see Figure \ref{fig:arch}. 1D patches are computed from the classifier-free guidance of SMILES embeddings and concatenated with PK class tokens. 
Diffusion step embeddings are generated using sinusoidal position encodings \cite{vaswani2023attentionneed}. 
Patches are then fed alongside sinusoidal step embeddings \cite{ho2021cascadeddiffusionmodelshigh} to a transformer base. 
We mask out missing values when computing the loss for the model only to flow gradients and learn from non-missing PK values during training.
Exponential Moving Average (EMA) \cite{tarvainen2018meanteachersbetterrole} is applied to the base model during training to generate the final model used for sampling.

\subsection{Pre-trained SMILES Encoder}

S2PK diffusion models need powerful semantic SMILE encoders to capture the complexity of arbitrary chemical structure inputs. Given the sparsity and small size of PK datasets, encoders trained on specific SMILES-Pharmacokinetic pairs are infeasible \cite{huang2021therapeutics}. 
Many transformer-based foundational models such as ChemBERTa \cite{chithrananda2020chemberta, ahmad2022chemberta}, SMILES-BERT \cite{Wang2019SMILESBERTLS}, and MOLGPT \cite{bagal2021molgpt} have been pre-trained to deeply understand molecular and chemical structures and properties.
After pre-training, these foundational models can then be fine-tuned for various downstream molecular tasks.
Language models trained on SMILES-only corpus, significantly larger than SMILES-Pharmacokinetic data, learn a richer and wider distribution of molecular and chemical structures.  

We test SMILES embeddings from ChemBERTa \cite{ahmad2022chemberta}, T5 \cite{raffel2023exploringlimitstransferlearning}, and DeBERTa \cite{he2021debertadecodingenhancedbertdisentangled} trained on SMILES-only corpora. 
We further test and compare embedding performance for SMILES embedding from ChemBERTa trained either on ZINC (100K molecules) \cite{irwin2005zinc} or PubChem (10M molecules) \cite{kim2023pubchem} SMILES corpora.
All SMILES embedding models were collected through the Huggingface \cite{wolf-etal-2020-transformers} Model Hub. 
As ChemBERTa, T5, and DeBERTa are all trained on a wide array of SMILES, embeddings from these models are an effective way to inject deep molecular understanding into our diffusion model. 
Similar to \citet{saharia2022photorealistic}, we freeze the weights of our embedding models.
Because embeddings are computed offline, freezing the weights minimizes computation and memory footprint for embeddings during model training.

\subsubsection{Classifier-free Guidance} 
Classifier guidance uses gradients from a pre-trained model to improve quality while reducing diversity in conditional diffusion models during sampling \cite{dhariwal2021diffusionmodelsbeatgans}. Classifier-free guidance \cite{ho2022classifierfreediffusionguidance} is an alternative technique that avoids this pre-trained model by jointly training a diffusion model on conditional and unconditional objectives via dropping the condition (i.e. with 10\% probability). We condition all diffusion models on learned SMILES embedding and sinusoidal time embeddings using classifier-free guidance through dropout \cite{ho2022classifierfreediffusionguidance, JMLR:v15:srivastava14a}.

\begin{table*}[t]
    \centering
    \begin{tabular}{l | l l l l l l l l l l l}
         Data Overlap Cardinality & \textbf{1} & 2 & 3 & 4 & 5 & 6 & 7 & 8 & 9 & 10 & 11 \\
         \hline
         Number of Data Points & \textbf{322235} & 3598 & 404 & 1110 & 34 & 105 & 27 & 4 & 14 & 2 & 0 \\
    \end{tabular}
    \caption{Number of data points for each data overlap cardinality over 11 PK datasets. Data overlap cardinality represents the number of datasets a data point is in. The number of data points in data overlap cardinality 1 represents the number of data points only in one of the 11 PK datasets. As data overlap cardinality increases, the number of data points greatly drops, describing the data overlap sparsity phenomenon. This lack of data negatively impacts downstream research investigations in high-throughput screening, poly-pharmacy, and drug combination.
    }
    \label{tab:totalsparse}
\end{table*}

\section{Experiments}

In the following, we describe the model training details and compare our synthetic data to real data, in terms of machine learning efficiency (MLE) and univariate and bivariate statistical distributions. We then discuss ablation studies and key findings. The metrics for MLE, univariate, and bivariate evaluations are further defined in their respective subsections. 

We compare Imagand with baselines of Conditional GAN (cGAN) \cite{mirza2014conditional} and Syngand \cite{hu2024syntheticdatadiffusionmodels}. Similar to in Imagand, SMILES-embeddings from a pre-trained T5 model are used conditionally by the cGAN model to generate PK properties as output for a specific drug. 
Additional training details and baseline results are provided in \autoref{appdx:training} and \autoref{appdx:results}.

\subsection{Pharmacokinetic Datasets}

All 11 PK datasets are collected from TDCommons \cite{huang2021therapeutics}. We select PK datasets suitable for regression from the absorption, distribution, metabolism, and excretion (ADME) and Toxicity categories. 
Looking over 11 PK datasets 
for target property screening, \autoref{tab:totalsparse} shows data overlap sparsity as the data overlap cardinality increases, the number of data points greatly drops. 

\textbf{Caco-2} \cite{Wang2016ADMEPE} is an absorption dataset containing rates of 906 drugs passing through the Caco-2 cells, approximating the rate at which the drugs permeate through the human intestinal tissue. 
\textbf{Lipophilicity} \cite{wu2018moleculenet} is an absorption dataset that measures the ability of 4,200 drugs to dissolve in a lipid (e.g. fats, oils) environment.
\textbf{AqSolDB} \cite{sorkun2019aqsoldb} is an absorption dataset that measures the ability of 9,982 drugs to dissolve in water.
\textbf{FreeSolv} \cite{mobley2014freesolv} is an absorption dataset that measures the experimental and calculated hydration-free energy of 642 drugs in water.

\textbf{Plasma Protein Binding Rate (PPBR)} \cite{ppbr} is a distribution dataset of percentages for 1,614 drugs on how they bind to plasma proteins in the blood.
\textbf{Volume of Distribution at steady state (VDss)} \cite{lombardo2016silico} is a distribution dataset that measures the degree for 1,130 drugs on their concentration in body tissue compared to their concentration in blood. 

\textbf{Half Life} \cite{obach2008trend} is an excretion dataset for 667 drugs on the duration for the concentration of the drug in the body to be reduced by half. 
\textbf{Clearance} \cite{di2012mechanistic} is an excretion dataset for around 1,050 drugs on two clearance experiment types, microsome and hepatocyte. Drug clearance is defined as the volume of plasma cleared of a drug over a specified time \cite{huang2021therapeutics}.

\textbf{Acute Toxicity (LD50)} \cite{ld50} is a toxicity dataset that measures the most conservative dose for 7,385 drugs that can lead to lethal adverse effects.
\textbf{hERG Central} \cite{du2011hergcentral} is a toxicity dataset that measure the blocking of Human ether-à-go-go related gene (hERG) for 306,893 drugs. hERG is crucial for the coordination of the heart's beating. hERG contains percentages inhibitions at $1\mu{}M$ and $10\mu{}M$.

\begin{table*}[t]
\centering
\begin{tabular}{l|l|cccc}
\hline
\multicolumn{1}{c|}{}&\multicolumn{1}{c}{}&\multicolumn{4}{c}{Model}\\
Data&Metric&\multicolumn{1}{l}{{cGAN}}&\multicolumn{1}{l}{{Sygd*}}&\multicolumn{1}{l}{Imgd}&\multicolumn{1}{l}{Real}\\
\hline
\multirow{3}{*}{Caco2}&MSE&0.165&0.276&\textbf{0.131}&0.634\\
&R2&-0.08&-3.35&\textbf{0.137}&-3.215\\
&PCC&0.338&0.302&\textbf{0.426}&0.352\\
\hline
\multirow{3}{*}{Lipo.}&MSE&\textbf{0.141}&0.313&0.150&0.167\\
&R2&\textbf{0.194}&0.126&0.138&0.04\\
&PCC&0.469&0.181&0.409&\textbf{0.499}\\
\hline
\multirow{3}{*}{AqSol}&MSE&\textbf{0.074}&0.107&0.08&0.075\\
&R2&\textbf{0.565}&0.348&0.533&0.564\\
&PCC&0.755&0.681&0.731&\textbf{0.756}\\
\hline
\multirow{3}{*}{FSolv}&MSE&0.198&0.182&\textbf{0.165}&0.624\\
&R2&-0.09&-0.023&\textbf{0.078}&-2.501\\
&PCC&0.422&\textbf{0.515}&0.391&0.383\\
\hline
\multirow{3}{*}{PPBR}&MSE&\textbf{0.26}&0.361&0.263&3.527\\
&R2&-0.082&-1.36&\textbf{-0.06}&-13.31\\
&PCC&\textbf{0.225}&0.125&0.223&0.095\\
\hline
\multirow{3}{*}{VDss}&MSE&0.209&0.307&\textbf{0.196}&0.535\\
&R2&-0.064&-0.843&\textbf{-0.015}&-1.771\\
&PCC&\textbf{0.306}&0.189&0.298&0.234\\
\hline
\end{tabular}
\hspace{1em}
\begin{tabular}{l|l|cccc}
\hline
\multicolumn{1}{c|}{}&\multicolumn{1}{c}{}&\multicolumn{4}{c}{Model}\\
Data&Metric&\multicolumn{1}{l}{{cGAN}}&\multicolumn{1}{l}{{Sygd*}}&\multicolumn{1}{l}{Imgd}&\multicolumn{1}{l}{Real}\\
\hline
\multirow{3}{*}{Half}&MSE&0.284&0.437&\textbf{0.261}&0.525\\
&R2&-0.536&-0.831&\textbf{-0.275}&-1.589\\
&PCC&0.134&0.065&0.034&\textbf{0.156}\\
\hline
\multirow{3}{*}{Cl.(H)}&MSE&\textbf{0.431}&0.563&0.433&1.863\\
&R2&\textbf{-0.153}&-1.14&-0.200&-4.24\\
&PCC&\textbf{0.144}&0.032&0.096&0.109\\
\hline
\multirow{3}{*}{Cl.(M)}&MSE&\textbf{0.203}&0.278&0.209&0.717\\
&R2&\textbf{-0.037}&-1.71&-0.043&-2.599\\
&PCC&0.25&0.189&\textbf{0.253}&0.132\\
\hline
\multirow{3}{*}{LD50}&MSE&0.103&0.111&\textbf{0.100}&0.105\\
&R2&0.252&\textbf{0.298}&0.277&0.240\\
&PCC&0.526&\textbf{0.558}&0.537&0.542\\
\hline
\multirow{3}{*}{hRG.1}&MSE&0.132&\textbf{0.121}&0.127&0.136\\
&R2&-0.135&-0.55&\textbf{-0.108}&-0.189\\
&PCC&0.035&0.060&0.062&0.062\\
\hline
\multirow{3}{*}{hRG.10}&MSE&0.134&\textbf{0.106}&0.115&0.121\\
&R2&-0.182&-0.075&\textbf{-0.023}&-0.081\\
&PCC&0.207&0.200&0.196&\textbf{0.212}\\
\hline
\end{tabular}
\caption{Comparing drug discovery Machine Learning Efficiency (MLE) regression performances between different models and with real train data. Mean Squared Error (MSE), R-Squared (R2), and Pearson Correlation Coefficient (PCC) values are averaged over 30 trials, with the best scores on the real test set bolded. *Syngand R2 and PCC results are scale-adjusted relative to Real-Real with cGAN and Imagand results.
} \label{tab:res}
\end{table*}

\subsection{Data Processing}

We first merge all 11 PK datasets to create a unified dataset containing 30K drugs over 12 unique PK columns for training and testing (90\%/10\% split) our models. Excluding the hERG dataset from which we sample 7.9K drugs, we merge the remaining 9 PK datasets for 22.1K unique drugs. We arrive at a total of 30K drugs in our unified dataset after merging the 7.9K drugs sampled from hERG into our 22.1K unique drugs from the other 9 PK datasets. 
We only sample 7.9K drugs from hERG to maintain balance in the unified dataset given the size imbalance of hERG compared to the other 9 PK datasets. 
After removing outliers ($Q1-1.5$IQR lower and $Q3+1.5$IQR upper bound), we are left with 28,397 drugs from the original 30K drugs. 
The 28,397 drug values for each of the 12 PK columns are then min-max scaled between the range of $[-1,1]$. 
Outliers are removed to ensure that Min-Max normalization does not cause unwarranted skewness in our trainset distribution, causing issues for model training.
Before infilling null values using one of the average, uniform, or Gaussian distributions, or the proposed DLGN method, we store the null masks for each drug for the masked loss function.


Using the trained S2PK model, we generate synthetic PK target properties for 3K ligands selected from our test dataset. The generated synthetic data, containing 3K ligands with all 12 target properties, can be used to augment real data for research requiring data spanning these target properties. Given the smaller size of real target property datasets, 3K synthetic target property ligands provide meaningful augmentations to the real data.

\subsection{Machine Learning Efficiency}

Machine Learning Efficiency (MLE) is a measure that assesses the ability of the synthetic data to replicate a specific use case \cite{dankar2021fake, basri2023hyperparameter, borisov2022language}.
MLE represents the ability of the synthetic data to replace or augment real data in downstream use cases. 
To measure MLE, 
two models are trained separately using synthetic versus real data, and then their performance, measured by Mean-Squared Error (MSE), R-Squared (R2), and Pearson Correlation Coefficient (PCC), is evaluated on real data test sets and compared.   


For this experiment, we train Linear Regression (LR) models using ChemBERTa embeddings to predict each PK target property value. 
To prevent data leakage, we first 
divide real and synthetic data before combining them to form train and test sets, as follows.
To ensure an adequately sized test set (\textgreater300 ligands, i.e. \textgreater10\% size of our synthetic data) to evaluate our downstream models, we divide real data into segments denoted  $A_{r}$ and $B_{r}$ using a 50\%/50\% split.
To ensure a synthetic test set similar in size to real data test sets ($\sim300$ ligands), we divide synthetic data into segments denoted $A_{s}$ and $B_{s}$ using a 90\%/10\% split. 
The real train set is defined as $A_{r}$ and the real test set is defined as $B_{r}$. The augmented train set is defined as $A_{r} \cup A_{s}$ and the augmented test set is defined as $B_{r} \cup B_{s}$.
Outliers are removed from both real and augmented train and test sets based on $Q1-1.5$IQR lower and $Q3+1.5$IQR upper bounds on the synthetic data. 

Table \ref{tab:res} shows the results of the PK regression tasks using real and synthetic augmented datasets. Results of these experiments suggest that a synthetic augmented dataset can outperform real data with statistical significance over many PK datasets. Additional tasks will be explored in future work. We see that synthetic data from both cGAN and Imagand can improve MLE over using only the real data. Imagand has similar or superior MLE performance compared to cGAN. 
\begin{figure}[h]
\centering
\includegraphics[width=0.7\columnwidth]{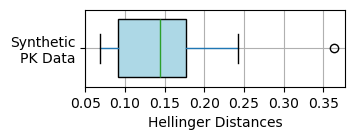}
\caption{Synthetic PK Data Hellinger Distances (HDs).}
\label{fig:hd}
\end{figure}
\subsection{Univariate Distributions}
\begin{figure}[t]
\centering
\includegraphics[width=1\columnwidth]{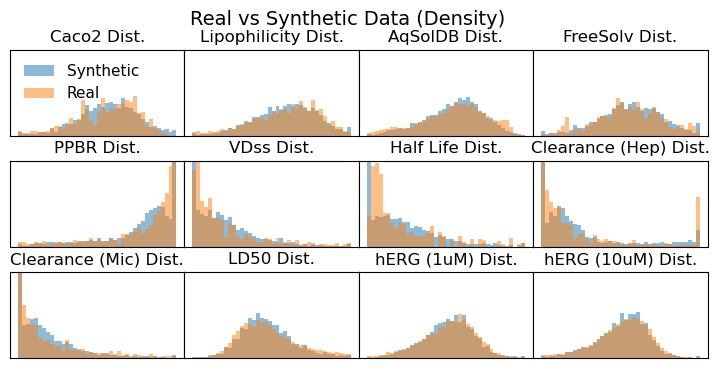} 
\caption{Distributions of ligand PK properties. Blue, synthetic distributions; orange, real distributions. 
}\label{fig:dist}
\label{fig:univariate}
\end{figure}
The generated synthetic data closely matches that of the real data; see Figure \ref{fig:dist}. 
Hellinger distance (HD) quantifies the similarity between two probability distributions and can be used as a summary statistic of differences for each PK target property between real and synthetic datasets. Given two discrete probability distributions $P = \{p_1,p_2,...,p_n\}$ and $Q = \{q_1, q_2,...,q_n\}$, the HD between $P$ and $Q$ is expressed in Equation \ref{eq:hd}. 
\begin{equation}\label{eq:hd}
HD^2(p,q) = 
\frac{1}{2} \sum^n_{i=1}\left( \sqrt{p_i} - \sqrt{q_i} \right)
\end{equation}
With scores ranging between 0 to 1, HD values closer to 0 indicate smaller differences between real and synthetic data and are thus desirable. 
Figure \ref{fig:hd} shows the HD values for our synthetic data compared to real data with the average HD being 0.15. 
\begin{table}[t]
\centering
\begin{tabular}{l|cc|cc}
\hline
\multicolumn{1}{c}{} & \multicolumn{2}{|c}{Mean} & \multicolumn{2}{|c}{Std} \\
Data                     & Real               & Syn & Real & Syn \\ \hline
\multirow{1}{*}{Caco2}  & 0.118             & 0.137        & 0.388                            & 0.375                         \\
\multirow{1}{*}{Lipophilicity}    & 0.184           & 0.179       & 0.417                              & 0.386                      \\
\multirow{1}{*}{AqSolDB}    & 0.106           & 0.107       & 0.412                               & 0.362                        \\
\multirow{1}{*}{FreeSolv}    & 0.103           & 0.123       & 0.421                              & 0.400                         \\
\multirow{1}{*}{PPBR}    & 0.570            & 0.562      & 0.496                                & 0.467                         \\
\multirow{1}{*}{VDss}    & -0.603           & -0.615       & 0.442                                & 0.389                         \\
\multirow{1}{*}{Half life}    & -0.557            & -0.559       & 0.450                               & 0.419                         \\
\multirow{1}{*}{Clearance (H)}    & -0.549        & -0.559          & 0.605                                & 0.551                         \\
\multirow{1}{*}{Clearance (M)}    & -0.670        & -0.676          & 0.445                                & 0.382                         \\
\multirow{1}{*}{LD50}    & -0.038           & -0.054       & 0.372                                & 0.331                         \\
\multirow{1}{*}{hERG 1uM}    & 0.036           & 0.031       & 0.338                                & 0.319                         \\
\multirow{1}{*}{hERG 10uM}    & 0.027           & 0.030        & 0.335                               & 0.316                         \\\hline
\end{tabular}
\caption{Comparing mean and standard deviation values between real and synthetic target property values.} \label{tab:mean}
\end{table}
Table \ref{tab:mean} compares the mean and standard deviation of the real and synthetic target property values. 
The mean and standard deviation of the generated synthetic data closely resemble that of the real data for each PK target property. 
We found that normalization combined with static thresholding substantially limits the generation of invalid and out-of-range PK values. 
Given the underpinnings of diffusion using Gaussian reparameterization \cite{ddpm}, diffusion methods have challenges learning and generating non-Gaussian data. This failure mode is evident in \autoref{fig:realvssyn} for non-Gaussian distributions PPBR, VDss, Half Life, and Clearance (Hep and Mic) where we see that the synthetic data fails to replicate the Log-logistic real-data distribution common in drug discovery datasets. 

\begin{figure*}[h]
\centering
\includegraphics[width=1.4\columnwidth]{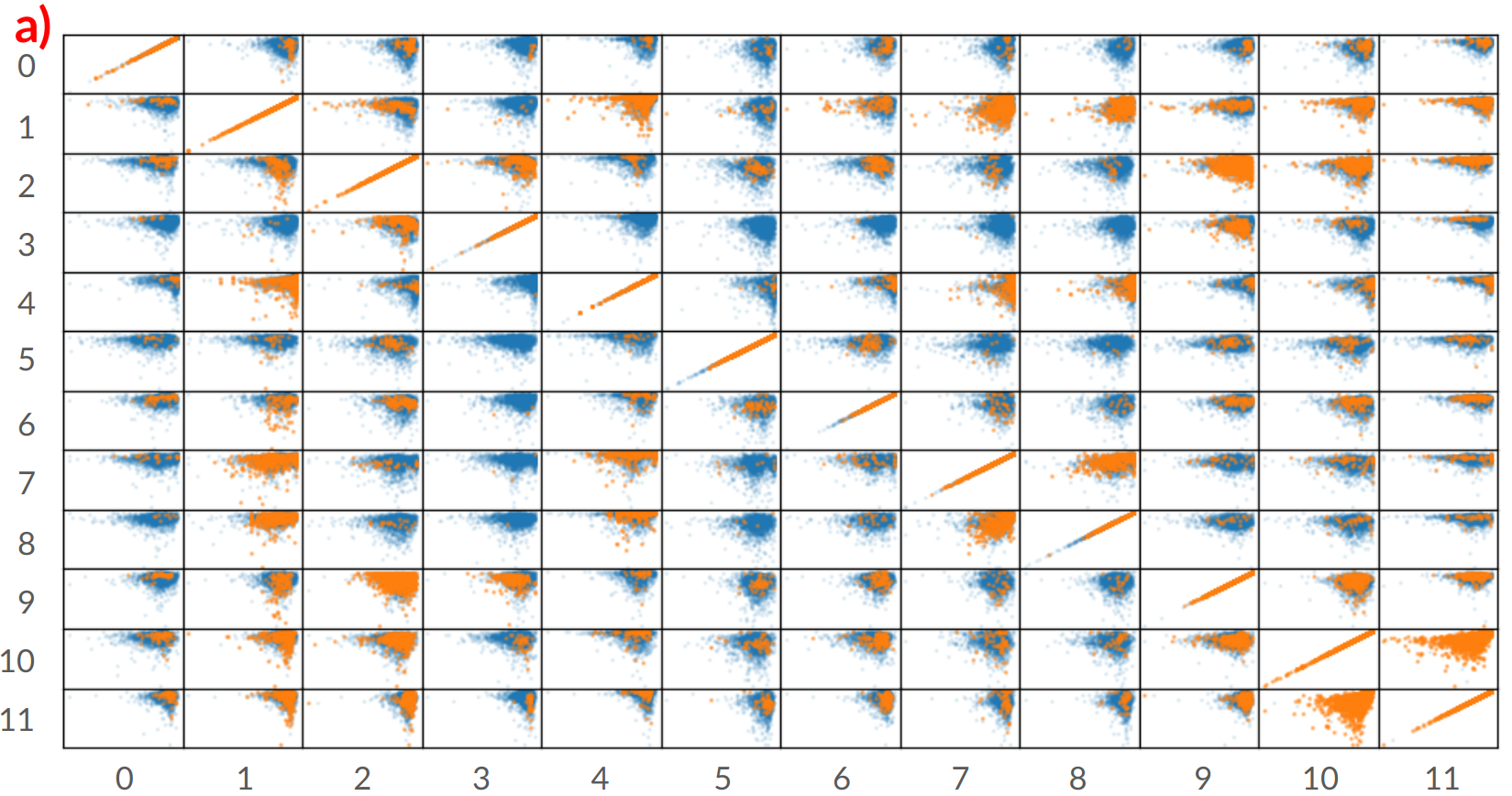} 
\includegraphics[width=1.4\columnwidth]{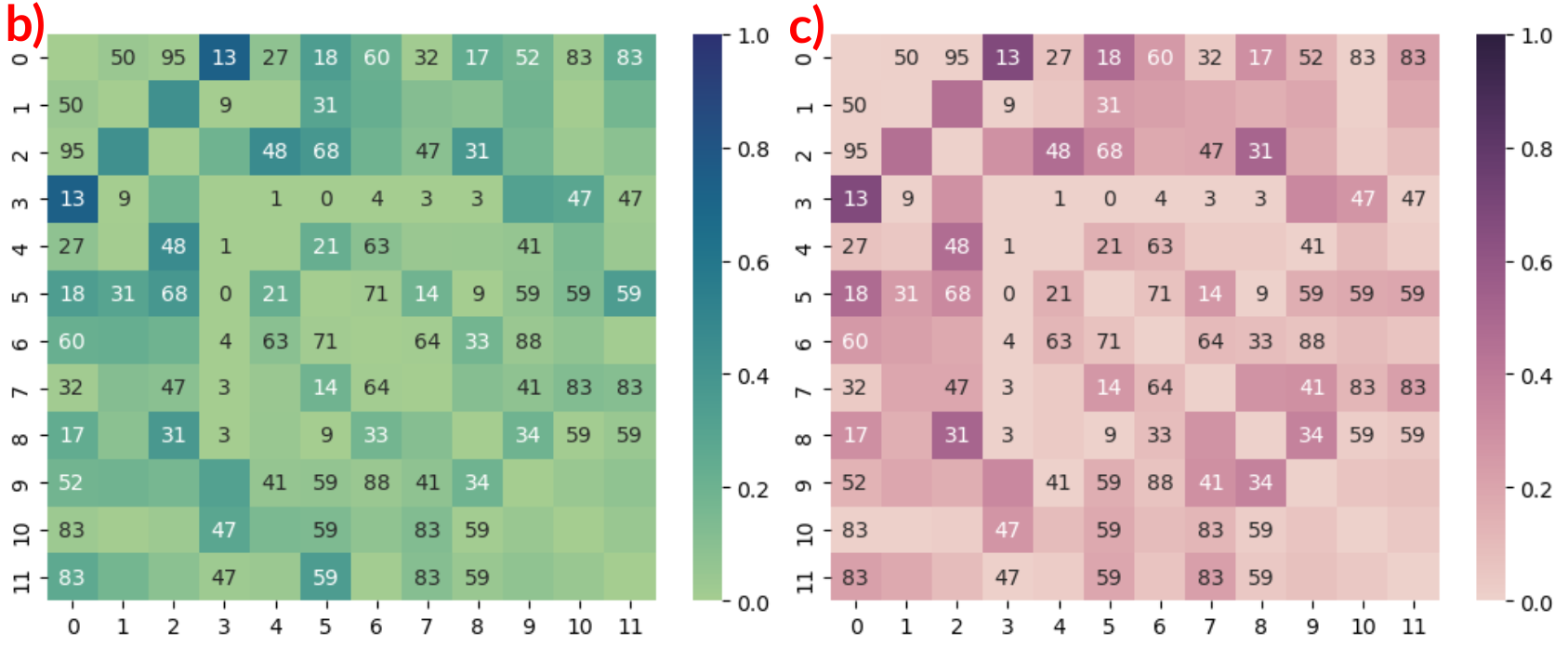}
\caption{Overview of Bivariate Comparison Between Synthetic and Real Data. Graph (a) shows pairwise scatter plots for pairs of PK target properties. Real data is marked in orange and synthetic data is marked in blue. The heatmap plots (b) and (c) are the Differential Pairwise Correlations (DPC) for pairs of PK target properties between real and synthetic data. The heatmap (b) graphs the DPC for the Pearson correlation coefficient. The heatmap (c) graphs the DPC for the Spearman correlation coefficient. PK target property values are numbered in order of (0) Caco2, (1) Lipophilicity, (2) AQSolDB, (3) FreeSolv, (4) PPBR, (5) VDss, (6) Half Life, (7) Clearance (Hep), (8) Clearance (Mic), (9) LD50, (10) hERG (1uM), and (11) hERG (10uM).}
\label{fig:realvssyn}
\end{figure*}
\subsection{Bivariate Distributions}

In addition to univariate comparisons, synthetic PK target properties can be compared to real data in terms of bivariate pairwise distributions and correlations. Bivariate pairwise scatterplots and Differential Pairwise Correlations (DPC) are shown in Figure \ref{fig:realvssyn}. Many pairwise combinations of PK target properties have very few overlapping real data values, and pairwise combinations with fewer than 100 examples have their cardinality numbered in the heatmaps in Figure \ref{fig:realvssyn}. We omit DPC values for pairwise combinations with cardinality less than 10. 

In combination with univariate HD, DPC provides a multivariate metric for evaluating the quality of synthetic data when compared to real data. We define the DPC as the absolute difference between the bivariate correlation coefficient of real and synthetic data as shown in Equation \ref{eq:cvcont}.


\begin{equation}\label{eq:cvcont}
\Delta CV_{{cont}_{XY}} = 
|\rho_{{XY}_{r}} - \rho_{{XY}_{s}}|
\end{equation}
where $X$ and $Y$ denote the two continuous variables, whereas $\rho_{XY}$ is the correlation coefficient for $X$ and $Y$. 
If the real and synthetic PK target property datasets are highly similar (i.e., the synthetic dataset closely resembles the real dataset), then the absolute difference would be close to 0 or very small. 
Heatmap (b) in Figure \ref{fig:realvssyn} shows DPC on the Pearson correlation coefficient (PCC). The average DPC for PCC is 0.123. Heatmap (c) in Figure \ref{fig:realvssyn} shows DPC on the Spearman correlation coefficient (SCC). cGAN and Syngand produce synthetic data with worse DPCs compared to Imagand. 
The average DPC for PCC for cGAN is 0.170, and 0.154 for Syngand, compared to 0.123 for Imagand. The average DPC for SCC for cGAN is 0.187, and 0.161 for Syngand, compared to 0.138 for Imagand.
These results indicate that the generated synthetic PK target properties resemble real data in pairwise correlations. 

Many pairwise combinations of the real data have a small cardinality of $<100$. As such, our synthetic PK target properties can benefit those pairwise combinations the most: researchers can augment pairwise real datasets with small cardinality to better answer pairwise target property research questions. Compared to pairwise target properties, overlap sparsity between combining multiple datasets results in even smaller cardinality. Scaling the S2PK model is straightforward, and can facilitate the generation of high-quality synthetic data that can be used to investigate multi-dataset research questions.
\begin{figure*}[h]
\centering
\includegraphics[width=1.8\columnwidth]{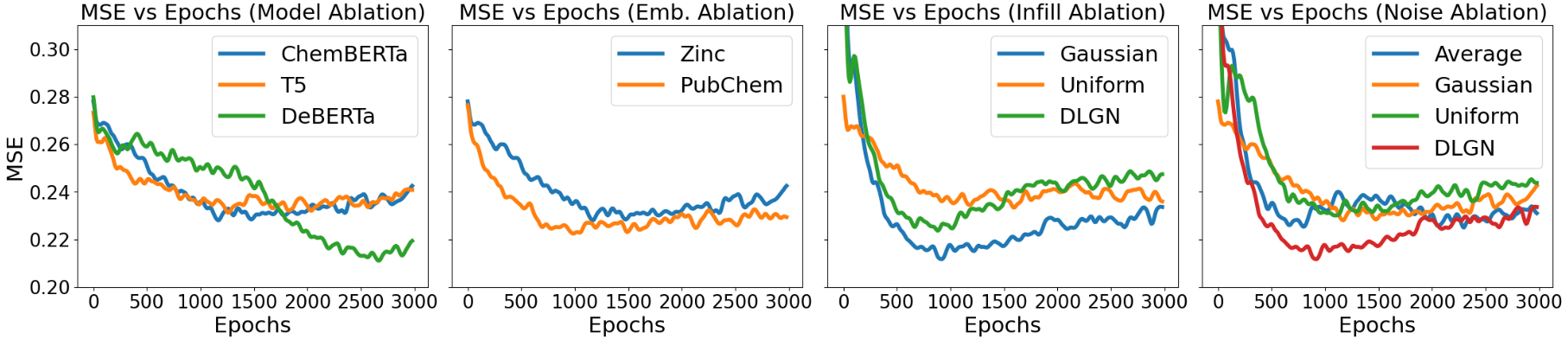} 
\caption{Mean Squared Error between real and synthetic target property data generated during training for different ablation experiments.}
\label{fig:valcurve}
\end{figure*}
\begin{table*}[t]
\centering
\begin{tabular}{c c c c c c c c c c c c c c|c}
    \hline
    \textbf{Ablation} & \textbf{Exp} & \textbf{C2} & \textbf{Li.} & \textbf{Aq} & \textbf{FS} & \textbf{PP} & \textbf{VD} & \textbf{HL} & \textbf{C.(H)} & \textbf{C.(M)} & \textbf{LD50} & \textbf{h.1} & \textbf{h.10} & \textbf{Avg}\\
    \hline
    \multirow{3}{2em}{Pre.}
    & \multirow{1}{2em}{CBert} & 0.26 & 0.17 & 0.16 & 0.25 & 0.25 & 0.31 & 0.37 & 0.29 & 0.33 & 0.17 & 0.14 & \textbf{0.13} & 0.23 \\
    & \multirow{1}{2em}{DBert} & \textbf{0.21} & \textbf{0.16} & 0.18 & \textbf{0.20} & \textbf{0.22} & \textbf{0.27} & \textbf{0.36} & \textbf{0.24} & \textbf{0.28} & 0.17 & 0.15 & 0.15 & \textbf{0.21} \\
    & \multirow{1}{2em}{T5}    & 0.25 & 0.16 & \textbf{0.15} & 0.25 & 0.26 & 0.30 & 0.36 & 0.28 & 0.30 & \textbf{0.15} & \textbf{0.13} & 0.13 & 0.22 \\
    \hline
    \multirow{2}{2em}{Emb.} 
    & \multirow{1}{2em}{Zinc} & \textbf{0.26} & 0.17 & \textbf{0.16} & \textbf{0.25} & \textbf{0.25} & 0.31 & \textbf{0.37} & \textbf{0.29} & 0.33 & \textbf{0.17} & \textbf{0.14} & \textbf{0.13} & \textbf{0.23} \\
    & \multirow{1}{2em}{PubC} & 0.27 & \textbf{0.17} & 0.16 & 0.25 & 0.27 & \textbf{0.30} & 0.38 & 0.30 & \textbf{0.30} & 0.17 & 0.15 & 0.15 & 0.24 \\
    \hline
    \multirow{3}{2em}{Infill} 
    & \multirow{1}{2em}{Gaus} & \textbf{0.24} & \textbf{0.15} & \textbf{0.14} & \textbf{0.24} & \textbf{0.21} & \textbf{0.26} & \textbf{0.33} & \textbf{0.25} & 0.25 & \textbf{0.15} & \textbf{0.12} & \textbf{0.13} & \textbf{0.20} \\
    & \multirow{1}{2em}{Unif} & 0.28 & 0.19 & 0.18 & 0.26 & 0.27 & 0.31 & 0.38 & 0.30 & 0.30 & 0.19 & 0.14 & 0.14 & 0.25 \\
    & \multirow{1}{2em}{DLG}  & 0.26 & 0.16 & 0.15 & 0.25 & 0.23 & 0.29 & 0.36 & 0.27 & \textbf{0.24} & 0.16 & 0.13 & 0.13 & 0.22 \\
    \hline
    \multirow{4}{2em}{Noise} 
    & \multirow{1}{2em}{Avg}  & \textbf{0.26} & \textbf{0.15} & 0.22 & 0.28 & 0.25 & 0.39 & 0.39 & 0.28 & 0.33 & 0.18 & 0.15 & 0.16 & 0.25 \\
    & \multirow{1}{2em}{Gaus} & 0.26 & 0.17 & 0.16 & \textbf{0.25} & 0.25 & 0.31 & 0.37 & 0.29 & 0.33 & 0.17 & \textbf{0.14} & \textbf{0.13} & 0.23 \\
    & \multirow{1}{2em}{Unif} & 0.26 & 0.16 & 0.16 & 0.26 & 0.24 & 0.32 & 0.39 & 0.28 & 0.31 & \textbf{0.16} & 0.14 & 0.14 & 0.23 \\
    & \multirow{1}{2em}{DLG}  & 0.27 & 0.18 & \textbf{0.15} & 0.26 & \textbf{0.23} & \textbf{0.29} & \textbf{0.34} & \textbf{0.27} & \textbf{0.27} & 0.16 & 0.14 & 0.14 & \textbf{0.22} \\
    \hline
    \multicolumn{2}{r}{cGAN} &0.19&0.16&0.17&0.18&0.25&\underline{0.24}&\underline{0.28}&0.32&0.29&0.15&0.13&0.13&0.21\\
    \multicolumn{2}{r}{Syngand} &0.62&0.53&0.34&0.50&0.66&0.81&0.85&0.59&0.58&0.45&0.14&0.11&0.52\\
    \multicolumn{2}{r}{\textbf{Imagand (Ours)}} & \underline{0.19} & \underline{0.12} & \underline{0.13} & \underline{0.18} & \underline{0.20} & 0.27 & 0.36 & \underline{0.20} & \underline{0.19} & \underline{0.11} & \underline{0.09} & \underline{0.09} & \underline{0.18} \\
    \hline
\end{tabular}
\caption{Average Hellinger Distance Across 30 Generated Synthetic Target Property Datasets for Ablation Experiment Configurations. The best HD values for each ablation test are bolded. The best HD values across all ablation tests are underlined. HD values for our selected model configuration for MLE, univariate, and bivariate analysis are included in the table.}
\label{tab:ablation}
\end{table*}


\subsection{Ablation Studies}
We conduct ablation studies to investigate the performance of our S2PK model given different SMILES encoders, encoder training sets, and sampling approaches for the infilling and noise model. Ablation study results (Table \ref{tab:ablation}) are averages over 30 generated synthetic target property datasets, covering 90K target property values for ligands, for each ablation training run. From our ablation studies we find that Imagand generates more realistic synthetic data compared to cGAN and Syngand baselines in terms of univariate distributions.
Figure \ref{fig:valcurve} graphs MSE between real and synthetic data generated during training for ablation experiments. From our ablation studies, we motivate our selected model configuration. 

\subsubsection{Pre-trained SMILES Encoder} We select different pre-trained SMILES encoders and pretraining datasets for ablation. Among encoder models, DeBERTa performs the best in terms of average HD and synthetic and real data MSE. Among encoder training datasets, PubChem and Zinc have similar HD, with PubChem producing better synthetic and real data MSE. This motivates the choice of DeBERTa and PubChem for our selected model configuration.

\subsubsection{Discrete Local Gaussian Noise Model}
We select different infilling strategies and noise models for ablation.
Comparing noise model ablations, we measure the average MSE that each method injects into the data with Gaussian (1.19), uniform (0.53), and DLGN (0.29) ordered from most to least. This confirms that DLGN injects noise closely resembling the prior distribution. Similarly, we confirm DLGN has the best HD compared to Gaussian and uniform noise models. Comparing infilling ablations, DLGN has the best overall performance in HD and synthetic and real data MSE. This motivates the choice of DLGN for both infilling and noise models for our selected model configuration. 

\section{Discussions}
Our work is a major step towards building a new class of foundational models for drug discovery trained over a diverse range of datasets. Given the problem of data sparsity, Imagand can be utilized primarily as a \textit{in silico} pre-clinical tool, aimed to reduce the costs of \textit{in vitro} experiments and high-throughput screening. As a research tool, scientists can utilize our models to investigate and generate properties for novel molecules to be used for downstream PBPK simulations without costly assays. Even as an initial step, Imagand has many real-world pre-clinical applications where data sparsity and data scarcity are challenges.  

Although we cover a wide variety of ADMET datasets, most of these datasets are {\it in vitro}. One of the critical challenges in drug discovery is quantitative in vitro-to-in vivo extrapolation (QIVIVE). QIVIVE is an approach that extrapolates from in vitro concentration-response data to in vivo safe exposures or to identify exposure levels causing adverse effects. 
For future work, we will look to extend our model to include \textit{in vivo} datasets and to investigate new applications of Imagand for QIVIVE. 
Future work will look to further explore DLGN as an alternative to Gaussian noise, including implementing the derived formulation, conducting scaling law analysis, and to compare against a broad range of models and benchmarks. 

\section{Conclusions}
The SMILES-to-Pharmacokinetic model Imagand generates synthetic PK target property data that closely resembles real data in univariate and bivariate distributions and for downstream tasks. Imagand provides a solution for the challenge of sparse overlapping PK target property data, allowing researchers to generate data to tackle complex research questions and for high-throughput screening. Future work will expand Imagand to categorical PK properties, and scale to more datasets and larger model sizes. 
In future work we will look to explore additional reparameterization tricks for diffusion, such as discrete diffusion \cite{austin2021structured}, to extend our methodology to be capable of learning and generating synthetic data following categorical and Log-logistic distributions common in drug discovery datasets.

\bibliography{example_paper}
\bibliographystyle{icml2025}

\newpage

\appendix
\onecolumn

\section{Discrete Local Gaussian Sampling}




\begin{theorem}
\label{th:kl}
Let $D_{KL}(P||Q)$ be the Kullback-Leibler (KL) divergence for a model probability distribution $Q$ and the true probability distribution $P$. We then have:
\[lim_{N \to \infty} D_{KL}( X || \mathbb{\phi}(X=x)) = lim_{N \to \infty}\sum_\chi X log(\frac{X}{\mathbb{\phi}(X=x)}) = 0\]
\end{theorem}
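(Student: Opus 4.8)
The plan is to reduce the statement to a convergence-of-densities argument and then upgrade that convergence to KL convergence. First I would record that $\phi$ always admits a Lebesgue density, since it is a Gaussian convolution: writing $U\sim\mathrm{Unif}[0,1]$, the law $\phi$ has density
\[
q_N(x) \;=\; \int_0^1 \mathcal{N}\!\left(x;\,\hat{F}_{X_N}^{-1}(u),\,\tfrac{\sigma}{N^2}\right)\,du .
\]
If $X$ has density $p$, then $D_{KL}(X\,\|\,\phi)=\int p(x)\log\frac{p(x)}{q_N(x)}\,dx$, so the whole theorem becomes a statement about the limiting behaviour of $q_N$. This framing also makes clear why the Gaussian smoothing is needed at all: it guarantees $q_N>0$ everywhere, so the log-ratio is well defined for every finite $N$.

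Next I would separate two effects. By inverse transform sampling the discretized quantile function $\hat{F}_{X_N}^{-1}$ converges to the true generalized inverse $F_X^{-1}$ at every continuity point as $N\to\infty$, so $\hat{F}_{X_N}^{-1}(U)\Rightarrow X$; since the added Gaussian noise has variance $\sigma/N^2\to 0$ and hence vanishes in probability, $\phi\Rightarrow X$ follows immediately. Weak convergence alone is not enough for KL, so I would strengthen this to pointwise density convergence $q_N(x)\to p(x)$ for almost every $x$: treating $q_N$ as a kernel density estimate with shrinking bandwidth applied to an increasingly fine quantile grid, an approximate-identity / Riemann-sum argument at Lebesgue points of $p$ yields the claim, and Scheff\'e's lemma then upgrades it to $q_N\to p$ in $L^1$.

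The hard part is upgrading this to $D_{KL}\to 0$, because KL divergence is only lower semicontinuous under weak convergence and pointwise density convergence does not by itself control the logarithmic integrand. The obstacle lives in the regions where $p$ or $q_N$ is small -- the tails and the low-density valleys -- where $\log(p/q_N)$ can blow up. To handle this I would either (i) establish a uniform lower envelope for $q_N$ on compact sets together with Gaussian-mixture tail bounds, producing an integrable dominating function for $p\log(p/q_N)$ and then invoke dominated convergence; or (ii) bound $D_{KL}(X\,\|\,\phi)\le \chi^2(X\,\|\,\phi)=\int \frac{(p-q_N)^2}{q_N}\,dx$ and show the right-hand side vanishes, which again reduces to controlling $q_N$ from below while $\|p-q_N\|_{1}\to 0$. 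Securing these quantitative lower bounds on $q_N$ uniformly in $N$, so that no probability mass escapes into regions where the ratio degenerates, is the crux of the argument; the weak-convergence and approximate-identity steps are comparatively routine, and some mild regularity or tail assumption on $X$ may be required to make the domination step rigorous.
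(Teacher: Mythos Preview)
Your proposal is considerably more careful than the paper's own argument and takes a genuinely different route. The paper's proof consists of exactly the step you label ``comparatively routine'': it observes that $\hat{F}_{X_N}^{-1}\to F_X^{-1}$ as $N\to\infty$ and that the Gaussian standard deviation tends to zero, concludes that $\phi$ has the same law as $F_X^{-1}(U)=X$ in the limit, and then simply asserts $D_{KL}(X\,\|\,\phi)=0$ as an immediate consequence. In other words, the paper stops precisely at the weak-convergence stage you correctly flag as insufficient, and makes no attempt at a density-level argument, a dominating function, or a $\chi^2$ bound.

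What each approach buys: the paper's argument is a two-line heuristic that captures the intuition (the sampler approaches the target law) but leaves the genuine analytic content unaddressed --- lower semicontinuity of KL under weak limits, possible blow-up of $\log(p/q_N)$ in low-density regions. Your route via an approximate-identity / KDE view of $q_N$ followed by dominated convergence or a $\chi^2$ inequality is the right shape for a rigorous proof, and your caveat that some regularity or tail hypothesis on $X$ is likely required is well placed: the paper imposes no such assumption, so the statement as written is not obviously true in full generality. Your proposal does not merely take a different path; it identifies and attempts to supply the part of the argument the paper omits.
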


\begin{proof}
\label{proof:kl}
Since $F_{X}^{-1} = lim_{N\to \infty}\hat{F}_{X_N}^{-1}$, and as the standard deviation $\frac{\sigma}{N} \to 0$ as $N\to\infty$, then $\mathbb{\phi}(X) \sim F_X^{-1}$, as $N\to\infty$. Since $F_X^{-1} = X$, then $D_{KL}( X || \mathbb{\phi}(X=x)) = 0$ as $N\to\infty$.
\end{proof}

From \autoref{th:kl} we see that DLG sampling becomes increasingly similar to the real data distribution as the number of bins increases. Empirically, using DLG sampling results in smoother training dynamics compared to Gaussian noise as well as higher-quality generated data. Theoretically, this may be because each de-noising step is smaller, which we see empirically, in turn making them easier to learn, when the noise is increasingly similar to the original distribution, especially given skewed or multi-modal real distributions. DLG noise also enables new training modalities, which we plan on exploring in future work; bypassing latent model training, and when prior real data distributions are known and well-defined.

\section{Training Details} \label{appdx:training}

\begin{table}[h]
    \centering
    \begin{tabular}{l r | l r}
        \hline
        \multicolumn{2}{c|}{Imagand Model} & \multicolumn{2}{c}{Diffusion Training} \\
        \hline
         Layers & 12 & Learning Rate & 1e-3 \\
         Heads & 16 & Weight Decay & 5e-2 \\
         MLP Dim. & 768 & Epoch & 3000 \\
         Emb. Dropout & 10\% & Batch Size & 256 \\
         Num Patches & 48 & Warmup & 200 \\
         Cond. Emb. Size & 768 & Timesteps (Train) & 2000 \\
         Time Emb. Size & 64 &  Timesteps (Infer.) & 150 \\
         PK Emb. Size & 256 & EMA Gamma ($\gamma$) & 0.994\\
         \hline
         
    \end{tabular}
    \caption{List of Imagand Model Hyperparameters used across experiments. Model hyperparameters include the number of layers, heads, multilayered perceptron (MLP) size, embedding dropout, and sizes for the conditional, time, and pharmacokinetic (Y) embeddings. Training hyperparameters include the learning rate, weight decay, number of epochs, batch size, warmup, diffusion timesteps used for training and inference, and the Exponential Moving Average (EMA) Gamma ($\gamma$).}
    \label{tab:hyperparameters}
\end{table}

We train a 19M parameter model for S2PK synthesis. 
Model hyperparameters were not optimized and are described in Table \ref{tab:hyperparameters}.
We do not find overfitting to be an issue. 
For classifier-free guidance, we joint-train unconditionally via dropout zeroing out sections of the SMILES embeddings with 10\% probability for all of our models. For the machine learning efficiency, and univariate and bivariate distribution analysis, we utilize DeBERTa embeddings trained on PubChem and DLGN for infilling and as the noise model. We compare our model configuration to other possible configurations in the ablation experiments. 
All experiments were conducted using a single NVIDIA GeForce RTX 3090 GPU. 

\subsection{Static Thresholding} We apply elementwise clipping the PK predictions to $[-1,1]$ as static thresholding, similar to \citet{saharia2022photorealistic, ddpm}. Since PK data is min-max scaled to the same $[-1,1]$ range as a preprocessing step, static thresholding is essential to prevent the generation of invalid and out-of-range PK values.  

\section{Convergence Analysis}\label{appdx:convergence}
Recent work by \citep{nakano2025} has proved the convergence of the original version of DDPM models as defined in \cite{ddpm}. As our work is based on the original version of DDPM with only a modification to the noise used, we leave the majority of the proof of convergence to the work by \citep{nakano2025}. Instead, to properly apply \citep{nakano2025} proof of convergence of DDPM to our formulation, we show how our modifications of DLG noise respect key conditions and lemmas required by \cite{nakano2025}. 
Relating to our noise modification is condition (H3) in \cite{nakano2025} proof of convergence
\begin{condition}[H3 in \cite{nakano2025}]
Let $z_i$ be the denoising term in a diffusion process. The function $z_i$ for the noise estimation satisfies
\[\underset{i=1,...,n}{max} ||z_i||_\infty = O((log\ n)^{\kappa_1}), \quad n \to \infty\]
for some constant $\kappa_1 > 0$
\end{condition}
\begin{proof}
    Utilizing conditions (H1), (H2), and Lemma 2 from \cite{nakano2025}, we have
    \[||\triangledown logp_i||_\infty \leq C_0/\sqrt{\bar\alpha_i}\leq C_0(log\ n)^{\kappa_1}\]
    Given the simplified version of the diffusion objective (as derived by \cite{ddpm}), it is known that the objective is equivalent to the score-matching objective \cite{chen2023samplingeasylearningscore}. More precisely
    \[ s_i(x) :=  - \frac{1}{\sqrt{1-\bar\alpha_i}} z_i(x)\]
    and the score function $\triangledown logp_i(\cdot)$ of $x_i$, $i=1,...,n$,
    \[\mathbb{E}|s_i(x_i) - \triangledown logp_i(x_i)|^2\]
    Hence, it is natural to assume that the norm of the estimated score function $s_i$ is bounded by $C^{\prime}_0(log\ n)^\kappa$ with some $C^{\prime}_0$. Given the following definition for DLG $z_i$
    \[z_i = \mathcal{N}(x; \hat{F}_{X_N}^{-1}(U), \frac{\phi}{N^2}\mathbf{I})\]
    where $z_i \sim F_X^{-1} = p_{data}$ leads to the condition (H3). 
\end{proof}








    

\section{Comparison to Baseline} \label{appdx:results}

We compare Imagand with a baseline in Conditional GAN (cGAN) \cite{mirza2014conditional} with 1.8M parameters and Syngand \cite{hu2024syntheticdatadiffusionmodels} with 9M parameters. Similar to in Imagand, SMILES-embeddings from a pre-trained T5 model are used conditionally by the cGAN model to generate PK properties as output for a specific drug. Compared to earlier results, \autoref{tab:cganmean}, \autoref{fig:synunivariate}, and \autoref{fig:cganunivariate} shows that Imagand is able to generate more realistic synthetic data compared to cGAN and Syngand.


\begin{table}[h]
\centering
\begin{tabular}{l|cccc|cccc}
\hline
\multicolumn{1}{c}{}&\multicolumn{4}{|c}{Mean}&\multicolumn{4}{|c}{Std}\\
Data&Real&Imgd&cGAN&Sygd&Real&Imgd&cGAN&Sygd\\\hline
\multirow{1}{*}{Caco2}&0.118&0.137&0.144&0.582&0.388&0.375&0.271&0.120\\
\multirow{1}{*}{Lipophilicity}&0.184&0.179&0.200&0.615&0.417&0.386&0.298&0.185\\
\multirow{1}{*}{AqSolDB}&0.106&0.107&0.132&0.102&0.412&0.362&0.291&0.182\\
\multirow{1}{*}{FreeSolv}&0.103&0.123&0.099&0.267&0.421&0.400&0.294&0.134\\
\multirow{1}{*}{PPBR}&0.570&0.562&0.628&0.974&0.496&0.467&0.367&0.091\\
\multirow{1}{*}{VDss}&-0.603&-0.615&-0.661&-0.985&0.442&0.389&0.310&0.065\\
\multirow{1}{*}{Half life}&-0.557&-0.559&-0.614&-0.982&0.450&0.419&0.315&0.058\\
\multirow{1}{*}{Clearance (H)}&-0.549&-0.559&-0.593&-0.974&0.605&0.551&0.465&0.086\\
\multirow{1}{*}{Clearance (M)}&-0.670&-0.676&-0.738&-0.985&0.445&0.382&0.312&0.064\\
\multirow{1}{*}{LD50}&-0.038&-0.054&-0.044&0.043&0.372&0.331&0.272&0.153\\
\multirow{1}{*}{hERG 1uM}&0.036&0.031&0.048&0.115&0.338&0.319&0.252&0.325\\
\multirow{1}{*}{hERG 10uM}&0.027&0.030&0.018&0.073&0.335&0.316&0.246&0.304\\\hline
\end{tabular}
\caption{Comparing mean and standard deviation values between real and synthetic target property values.} \label{tab:cganmean}
\end{table}

\begin{figure*}[h]
\centering
\includegraphics[width=0.5\columnwidth]{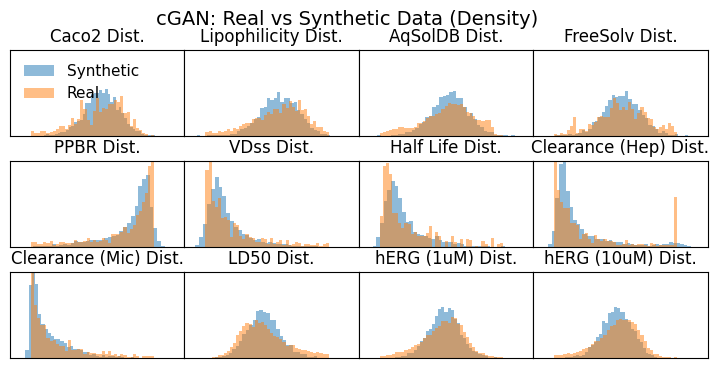} 
\includegraphics[width=0.3\columnwidth]{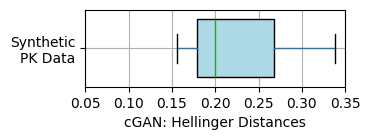}
\caption{Distributions of ligand PK properties and synthetic PK Data Hellinger Distances (HDs) for cGAN. Blue, synthetic distributions; orange, real distributions. 
}\label{fig:dist}
\label{fig:cganunivariate}
\end{figure*}

\begin{figure*}[h]
\centering
\includegraphics[width=0.5\columnwidth]{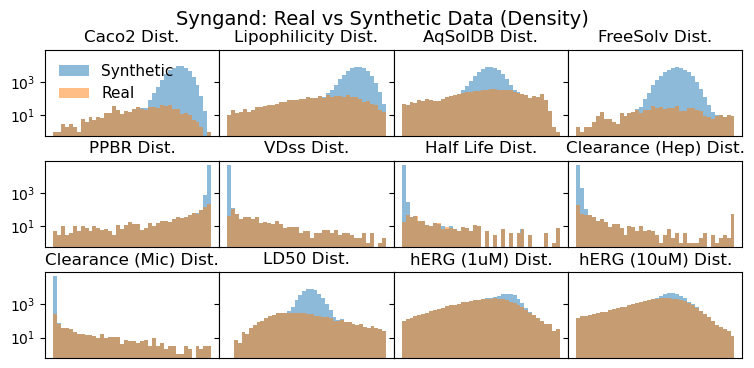} 
\includegraphics[width=0.3\columnwidth]{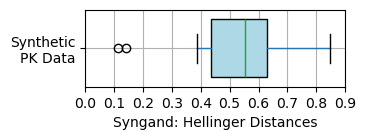}
\caption{Distributions of ligand PK properties (log-scale) and synthetic PK Data Hellinger Distances (HDs) for Syngand. Blue, synthetic distributions; orange, real distributions. 
}\label{fig:dist}
\label{fig:synunivariate}
\end{figure*}

\end{document}